\newtheorem{theorem}{Theorem}
\newtheorem{definition}{Definition}
\newtheorem{lemma}{Lemma}
\newtheorem{remark}{Remark}
\newtheorem{assumption}{Assumption}
\def\BibTeX{{\rm B\kern-.05em{\sc i\kern-.025em b}\kern-.08em
    T\kern-.1667em\lower.7ex\hbox{E}\kern-.125emX}}
\begin{document}
\title{Prescribed-Time Synchronization of Multiweighted and Directed Complex Networks}
\author{Linlong Xu and Xiwei Liu,~\IEEEmembership{Senior Member, IEEE}

\thanks{This work was supported by the National Natural Science Foundation of China under Grant No. 62073243, 61673298; the Shanghai Municipal Science and Technology Major Project (2021SHZDZX0100), and the Fundamental Research Funds for the Central Universities.}
\thanks{The authors are with Department of Computer Science and Technology, Tongji University, and with the Key Laboratory of Embedded System and Service Computing, Ministry of Education, Shanghai 201804, China. Corresponding author: Xiwei Liu. (e-mail: xwliu@tongji.edu.cn).}
\thanks{}}

\maketitle

\begin{abstract}
In this note, we study the prescribed-time (PT) synchronization of multiweighted and directed complex networks (MWDCNs) via pinning control. Unlike finite-time and fixed-time synchronization, the time for synchronization can be preset as needed, which is independent of initial values and parameters like coupling strength. First and foremost, we reveal the essence of PT stability by improper integral, L'Hospital rule and Taylor expansion theory. Many controllers established previously for PT stability can be included in our new model. Then, we apply this new result on MWDCNs as an application. The synchronization error at the prescribed time is discussed carefully, so, PT synchronization can be reached. The network topology can be directed and disconnected, which means that the outer coupling matrices (OCMs) can be asymmetric and not connected. The relationships between nodes are allowed to be cooperative or competitive, so elements in OCMs and inner coupling matrices (ICMs) can be positive or negative. We use the rearranging variables' order technique to combine ICMs and OCMs together to get the sum matrices, which can make a bridge between multiweighted and single-weighted networks. Finally, simulations are presented to illustrate the effectiveness of our theory.
\end{abstract}

\begin{IEEEkeywords}
Complex networks, improper integral, multiweighted and directed, prescribed-time, synchronization
\end{IEEEkeywords}

\section{Introduction}
\label{sec:introduction}
A complex dynamical network can be regarded as a graph with sundry nodes, complex structure, and divers connections. Over the past years, many scholars have conducted research on its dynamics such as consensus \cite{RR}, synchronization \cite{WT1} and control \cite{CTP1}, which means that every node in the network reaches the same state after a duration of time.

The settling time of synchronization is a key index to evaluate the network efficiency. This time means that synchronization should be achieved within it, which helps estimate the completion time of tasks. By now, this performance index has become a hot topic \cite{SPB}-\cite{BDN}. One type is the finite-time synchronization, which is an effective way to accelerate speed \cite{SPB}. It has advantages of good disturbance rejection and robustness against uncertainties, but the settling time heavily depends on initial values of the network. The other type is the fixed-time synchronization firstly proposed in \cite{PY1}, where the settling time can be independent on initial values. The essence of finite-time and fixed-time stability (or synchronization) was investigated in \cite{LLC2016,lxw2} by using the inverse function method.

Although the settling time of fixed-time stability is independent of initial states, it actually depends on the network parameters, such as coupling strengths, control gains, etc. Thus, the settling time for finite-time and fixed-time stability cannot be prescribed arbitrarily. The key difficulty lies that how to design a suitable controller to obtain prescribed-time (PT) stability. A new concept called ``prescribed convergence time'' was firstly proposed in \cite{LF1}, but it is often larger than the actual convergence time. To overcome this, Sanchez-Tones {\it{et al.}} \cite{JD1} put forward ``prescribed-time stability''. From then, some leading works have been presented, such as PT consensus \cite{WYJ1}-\cite{RYH1}, PT synchronization \cite{LD1}-\cite{lxy2} and control \cite{LWQ1}-\cite{DT2}. For consensus problem, Wang {\it{et al.}} \cite{WYJ1} presented a novel distributed protocol upon a new scaling function for multiagent systems. Based on a similar scaling function, Guo and Liang \cite{LJL1} designed the PT bipartite consensus protocol by using the time scale theory. Second-order multiagent systems were also investigated in \cite{RYH1}. For PT synchronization, an event-triggered control with a time-varying control gain was developed in \cite{LD1} and a smooth controller was designed in \cite{lxy1} for PT cluster synchronization. Shao {\it{et al.}} designed smooth controllers to achieve synchronization on Lur'e networks \cite{lxy2}. For stochastic nonlinear strict-feedback systems, Li and Krstic considered the PT mean-square stabilization in \cite{LWQ2} and PT output-feedback control in \cite{LWQ1}. Instead of fractional-power state feedback, \cite{SYD1}-\cite{SYD5} built a time-varying PT controller based on regular state feedback. Wang {\it{et al.}} \cite{WZW1} proposed the concept of ``practical prescribed-time stability'' and developed an adaptive fault-tolerant controller. Shakouri and Assadian \cite{SA1} defined ``triangular stability'' and proved that a PT controller can achieve it. Similar to \cite{KP1}, Tran and Yucelen \cite{DT1} established generalized time transformation functions and executed a control algorithm for stability analysis on perturbed systems, whereas time transformation was also applied on multiagent systems \cite{DT2}. Linear state and observer-based output feedbacks were designed for PT stabilization in \cite{ZB1}.

From the aforementioned papers, we can see that although PT stability or synchronization has been studied from lots of aspects, the essence of PT stability is still not revealed, and most of them only discuss networks with a single weight. Nodes may have many kinds of connections in practice, for instance, one can travel from one point to another point by railway, highway, ship, airplane, etc. Yao {\it{et al.}} \cite{YXQ1} considered the synchronization of fractional-order multiweighted complex networks, and Wang {\it{et al.}} \cite{WJL2} investigated the ${{H}}_{\infty}$ synchronization. For multiweighted and directed complex networks (MWDCNs), the design of Lyapunov function is a difficulty, and Liu \cite{lxw1} proposed two useful techniques to deal with this problem, whose routes were also adopted in \cite{wjq} and \cite{LSR2}.

The main contributions are listed as:

1. A new and general control scheme is designed to guarantee PT stability. The essence of PT stability is revealed through improper integral, L'Hospital rule and Taylor expansion theory. Time-varying functions are more general than many previous papers \cite{lxy1}-\cite{lxy2}, \cite{SYD1}-\cite{SYD5}, so that what they have established are included in our results. The control input is also proved to be bounded and be zero within the prescribed-time $T$.

2. We apply the obtain results on the PT synchronization problem for MWDCNs. In contrast to \cite{YXQ1} and \cite{WJL2}, where outer coupling matrices (OCMs) are required to be symmetric, we relax this requirement to be asymmetric and these OCMs are not necessarily strongly connected or even not connected. In addition to considering the cooperative relationship between nodes, we also consider the competitive relationship. It is better and has more application scenarios than \cite{lxw1,wjq,LSR2}.

3. Compared with many existing studies on finite-time and fixed-time synchronization, our work on PT synchronization is more challenging. The settling time is fully independent of initial states and network parameters. Rearranging variables' order technique is applied to obtain the sum (union) matrices by combining inner coupling matrices (ICMs) and OCMs.

In Section \ref{pre}, we give some definitions and lemmas. PT stability is carefully investigated, and its essence is revealed by the improper integral. In Section \ref{MWD}, PT synchronization for MWDCNs is investigated. Numerical simulations are given in Section \ref{simulation}. At last, we summarize this note in Section \ref{conclusion}.

\section{Preliminaries}\label{pre}
Let $\mathscr{G}=(\mathscr{X}, \mathscr{E}, M)$ be a directed graph, node set $\mathscr{X}= \{x_1, x_2, \cdots, x_N\}$, and denote $\Omega = \{1, 2, \cdots, N\}$. If an arc $(x_i, x_j)$ is in the set $\mathscr{E}$, then $M_{ij}\neq 0$ in the weighted adjacency matrix. For an undirected graph, $M_{ij}= M_{ji}$ always holds. For a directed graph (digraph), conversely, $M_{ij}$ and $M_{ji}$ may not be equal. In this note, we only consider digraphs, since undirected graph is just a special case of digraphs.

\begin{definition}(\cite{WT1})
An irreducible matrix $M = (M_{ij})\in R^{N \times N}$ is said to be a member
of \textbf{$A_1$}, denoted as $A\in$\textbf{$A_1$} if
\begin{align*}
\left\{
\begin{array}{lr}
M_{ij} \geq 0, M_{ii} = -\sum_{j \neq i} ^N M_{i j}, \quad \forall i \neq j \in \Omega;\\
\mathrm{Re}[\lambda (M)] < 0;
\end{array}
\right.
\end{align*}
where $\mathrm{Re}[\cdot] < 0$ means that the real parts of eigenvalues are all negative except an eigenvalue $0$ with the right eigenvector $\textbf{1}_N=(1, 1, \cdots, 1)^T$ and multiplicity $1$.
\end{definition}

\begin{assumption}\label{as1}
Suppose $f(\cdot):R^n \rightarrow R^n$ is continuous and $\mathcal{H} _f$ is a positive constant, then the following condition holds:
\begin{align*}
(x-\tilde{x})^T(f(x)-f(\tilde{x}))\le\mathcal{H} _f(x-\tilde{x})^T(x-\tilde{x}),
\end{align*}
for any vectors $x, \tilde{x} \in R^n$.
\end{assumption}

\begin{lemma}\label{l1}(\cite{RAH})
For any vectors $\xi \in R^N$ and a symmetric matrix $\mathbb{A}  \in R^{N \times N}$, denote $\lambda_{\max}(\mathbb{A} )$ and $\lambda_{\min}(\mathbb{A} )$ as the largest and smallest eigenvalue of $A$, then
\begin{align*}
\lambda_{\min}(\mathbb{A} )\xi^T \xi \leq \xi^T \mathbb{A} \xi \leq \lambda_{\max}(\mathbb{A} )\xi^T \xi.
\end{align*}
\end{lemma}

Next, we present some important lemmas for PT stability.
\begin{lemma}\label{l2}
For a continuous and non-negative real function $V(t)$, suppose the following equation holds:
\begin{align}\label{simplemodel}
\dot{V}(t) = -\frac{\delta V(t)}{\mathcal{C}(t)},
\end{align}
where $\delta>0$ is a constant, $\mathcal{C}(t)$ is a monotonically decreasing function which is nonnegative and contains $T$, and $\mathcal{C}(t)\to 0$ when $t\to T^{-}$.
If
\begin{align}\label{improper}
\int_{0}^T\frac{dt}{\mathcal{C}(t)}=+\infty,
\end{align}
then $V(t)$ will achieve PT stability, i.e., $\lim_{t\to T^{-}}V(t)=0$.
\end{lemma}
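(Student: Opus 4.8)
The plan is to recognize \eqref{simplemodel} as a separable first-order ordinary differential equation and to integrate it explicitly, so that the conclusion becomes transparent. First I would dispose of the degenerate case $V(0)=0$: substituting into \eqref{simplemodel} gives $\dot{V}\equiv 0$, hence $V(t)\equiv 0$ and the claim is immediate. It therefore remains to treat the case $V(0)>0$.

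Before separating variables I would verify that the right-hand side is well defined on $[0,T)$. Since $\mathcal{C}(t)$ is monotonically decreasing, nonnegative, and tends to $0$ only as $t\to T^{-}$, it must satisfy $\mathcal{C}(t)>0$ for every $t\in[0,T)$ (otherwise monotonicity and nonnegativity would force $\mathcal{C}$ to vanish on a whole subinterval, where the equation \eqref{simplemodel} could not even be posed). Consequently $\dot{V}(t)=-\delta V(t)/\mathcal{C}(t)\le 0$, so $V$ is nonincreasing on $[0,T)$; together with $V(0)>0$ this keeps $V(t)>0$ throughout $[0,T)$, which is exactly what is needed to divide by $V$ in the next step.

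Separating variables and integrating from $0$ to $t$ then yields
\begin{align*}
\ln\frac{V(t)}{V(0)}=-\delta\int_{0}^{t}\frac{ds}{\mathcal{C}(s)},
\qquad\text{i.e.}\qquad
V(t)=V(0)\exp\!\left(-\delta\int_{0}^{t}\frac{ds}{\mathcal{C}(s)}\right).
\end{align*}
Finally, letting $t\to T^{-}$ and invoking the divergence hypothesis \eqref{improper}, the exponent $-\delta\int_{0}^{t}ds/\mathcal{C}(s)$ tends to $-\infty$, so the exponential factor tends to $0$ and hence $\lim_{t\to T^{-}}V(t)=0$, establishing PT stability.

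The argument is essentially elementary once the equation is solved in closed form; the only point requiring genuine care --- and the step I expect to be the main obstacle --- is justifying that $\mathcal{C}(t)$ stays strictly positive on $[0,T)$, so that both the division by $\mathcal{C}$ in \eqref{simplemodel} and the division by $V$ during separation are legitimate, and then recognizing that it is precisely the improper integral in \eqref{improper} that forces the decay. If instead one were handed a differential inequality $\dot{V}\le-\delta V/\mathcal{C}$, the same conclusion would follow by a standard comparison (Gr\"onwall-type) argument, bounding $V$ above by the explicit solution just derived.
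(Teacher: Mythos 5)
Your proposal is correct and follows essentially the same route as the paper: separate variables, integrate to obtain $V(t)=V(0)\exp\bigl(-\delta\int_0^t ds/\mathcal{C}(s)\bigr)$, and invoke the divergence hypothesis \eqref{improper} to conclude $\lim_{t\to T^{-}}V(t)=0$; your added care about $V(0)=0$ and the strict positivity of $\mathcal{C}$ and $V$ only tightens the argument. Note that the paper's proof goes on to analyze $\lim_{s\to T}V(s)/\mathcal{C}(s)$ (whether $\dot V(T)$ is zero, finite, or infinite, via L'Hospital/Taylor considerations), but that discussion concerns boundedness of the eventual control input and is not required for the lemma's stated conclusion.
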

\begin{proof}
The model (\ref{simplemodel}) is clearly equivalent to
\begin{align}\label{nec}
\frac{dV(t)}{V(t)} = -\frac{\delta dt}{\mathcal{C}(t)},
\end{align}
where the left hand would be $\ln V(t)-\ln V(0)$, so if PT stability should be realized, then condition (\ref{improper}) should hold, i.e., improper integral should diverge.

If $\mathcal{C}(t)=T-t$, then (\ref{improper}) will become
\begin{align*}
&\int_{0}^T\frac{dt}{T-t}= \ln(T)-\lim_{s\to T}\ln(T-s) = +\infty,
\end{align*}
and for $\mathcal{C}(t)=(T-t)^{\ell}, \ell>1$, (\ref{improper}) would also hold obviously. On the other hand, if $0<\ell<1$, then $\int_{0}^T1/{\mathcal{C}(t)}dt$ would converge, so (\ref{improper}) does not hold.

Then, for model (\ref{simplemodel}), we have
\begin{align*}
\ln V(s) = \ln V(0) - \delta \int_{0}^s\frac{dt}{\mathcal{C}(t)}\Rightarrow V(s) = V(0)e^{-\delta\int_{0}^s\frac{dt}{\mathcal{C}(t)}}\nonumber
\end{align*}
Hence, $V(T)=\lim_{t\to T^{-}}V(t)=0$ can be deduced from (\ref{improper}).  However, $V(T) = 0$ does not mean that its derivative is also zero.
That is to say, $\dot{V}(T)$ may not be zero, or even be infinite at time $T$.
That is why it is necessary to discuss it. In the following, for simplicity, we simplify $t\to T^{-}$ as $t \to T$.

Now, a key discussion about the derivative of $V(t)$ at point $T$ should be given.
If $\mathcal{C}(t)=T-t$, then $V(s)=V(0)(T-t)^\delta/T^{\delta}$, and
\begin{align*}
\varPhi (T) = \lim\limits_{s\to T}\frac{V(s)}{\mathcal{C}(s)} = \lim\limits_{s\to T}V(0)(T-s)^{\delta-1}/T^{\delta},
\end{align*}
therefore, if $\delta<1$, then $\varPhi(T)$ would be infinity; if $\delta=1$, then $\varPhi(T)$ would be a non-zero constant; if $\delta>1$, then $\varPhi(T)=0$.

Otherwise, if $\mathcal{C}(t) = (T-t)^{\ell}, \ell>1$
\begin{align*}
\varPhi(T) = &\lim\limits_{s\to T}\frac{V(s)}{\mathcal{C}(s)}
= V(0)e^{\frac{\delta}{l-1}(T)^{1-\ell}} \lim\limits_{s\to T}\frac{e^{\frac{\delta}{1-l}(T-s)^{1-\ell}}}{(T-s)^{\ell}}
\end{align*}
Based on the Taylor expansion theory, we can also deduce that $\varPhi(T)=0$, so $\dot{V}(T) = 0$. Hence, PT stability is obtained.

For other forms of $\mathcal{C}(t)$, we can expand it by the Taylor series as the powers of $(T-t)$. The proof is completed.
\end{proof}

An example for the dynamics of (\ref{simplemodel}) with  $\mathcal{C}(t)=(1-t)^{\ell}$ are shown in Fig. \ref{se} with $\ell=1,2,3$, $\delta =0.5, 1, 2$, and $V(0) = 15$. For fixed $\ell$, the larger $\delta$ is, the higher the convergence speed is (but the settling times are the same), see the three red lines and blue lines. On the other hand, for fixed $\delta$, the larger $\ell$ is, the higher the convergence speed is, see the three solid lines.

\begin{figure}[h]
    \centering
    \includegraphics[height=0.4\textheight,width=0.7\textwidth] {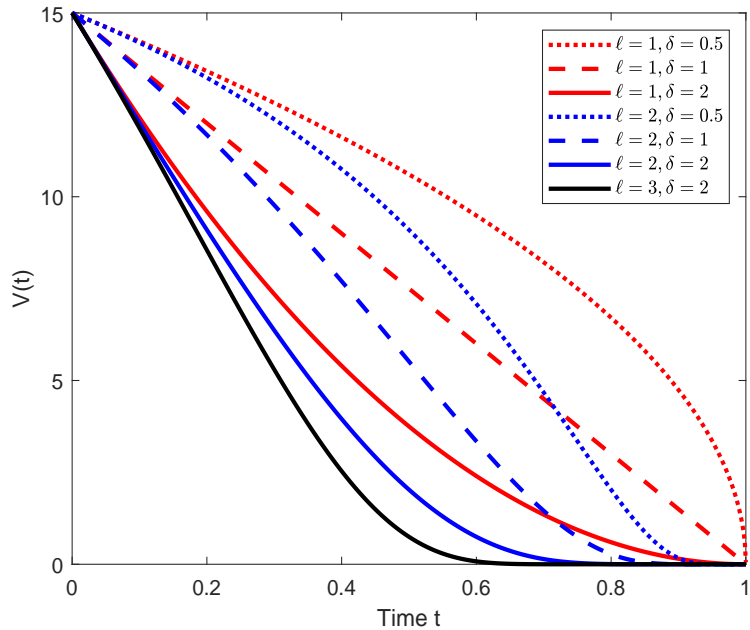}
    \caption{Different dynamics of $V(t)$ under different $\ell$ and $\delta$. One should notice: when $\ell=1$, if $\delta\le 1$, see the first and the second red lines, the derivative of $V(t)$ at time $T$ would be infinity or a non-zero constant, whereas if $\ell>1$, then no matter $\delta$ is, the derivative of $V(t)$ at time $T$ would be zero.}\label{se}
\end{figure}

From the aforementioned discussions, we can deduce that the number of the time-varying function $\mathcal{C}(t)$ is infinite. In lots of previous papers which investigate PT stability, they have constructed different forms of $\mathcal{C}(t)$. Here we summarize them as two categories: one contains exponential functions, whereas the other contains power functions. For the former, in \cite{ZB1}, $1/\mathcal{C}(t)=\frac{e^{aT} - 1}{e^{aT} - e^{at}}(a > 0)$, so
\begin{align*}
\int_{0}^T\frac{dt}{\mathcal{C}(t)}&=\int_{0}^T\frac{e^{aT} - 1}{e^{aT} - e^{at}}dt=\frac{e^{aT}-1}{e^{aT}}\int_{0}^T\frac{1}{1- e^{a(t-T)}}dt\\
&=\frac{e^{aT}-1}{ae^{aT}}\bigg[\lim_{s\to T}\ln\frac{e^{a(s-T)}}{1-e^{a(s-T)}}-\ln\frac{e^{-aT}}{1-e^{-aT}}\bigg]=+\infty,
\end{align*}
and in \cite{WZW1}, $1/\mathcal{C}(t)=\frac{ae^{a(T - t)}}{e^{a(T - t)} - 1}(a > 0)$, which can be integrated as above except the coefficient. For the latter, there are many forms of $(T - t)^\ell$, for example, $\ell = 1$ in \cite{lxy1} and \cite{lxy2}; $\ell \geq 1$ in \cite{SA1}; $\ell = 2$ in \cite{RYH1}; $\ell \geq 2$ in \cite{LWQ1}; $\ell = 4$ in \cite{SYD4} and \cite{SYD5}. All of them satisfy $\int_{0}^T 1/\mathcal{C}(t)dt = +\infty$, so that PT stability is achieved according to Lemma \ref{l2}.

\begin{remark}
In fact, a more general model can be set up:
\begin{align}\label{secondmodel}
\dot{V}(t) = -\frac{\delta V^p(t)}{\mathcal{C}(t)}, ~~V(t)\ge 0, ~~\mathcal{C}(t)> 0,
\end{align}
when $p=1$, it becomes (\ref{simplemodel}). Otherwise,
\begin{align*}
V(s)^{-p+1}-V(0)^{-p+1}=(p-1)\int_0^s\frac{\delta}{\mathcal{C}(t)}dt
\end{align*}
Therefore, if $p>1$, then the improper integral (\ref{nec}) should diverge, and PT stability can be obtained with the similar analysis as the above lemma. On the other hand, if $0<p<1$, then integral (\ref{nec}) should converge, for example, $\mathcal{C}(t)$ is a constant, then it becomes the finite-time stability like \cite{LLC2016}. Therefore, we still can consider the finite-time or fixed-time stability over prescribed-time interval, and in this case improper integral may become normal integral.
\end{remark}

Next, we consider a more complicated but more useful model, and consider its PT stability.
\begin{lemma}\label{ll5}
Suppose there is a continuous and non-negative real function $V(t)$ with
\begin{align}\label{thirdmodel}
\dot{V}(t)=\delta_1 V^p(t) - \frac{\delta_2V(t)}{\mathcal{C}(t)},
\end{align}
where $\delta_1\ge 0, \delta_2>0$, $p > 0$ and $\mathcal{C}(t)$ is defined in Lemma \ref{l2}. If (\ref{improper}) holds, then PT stability can be achieved with large $\delta_2$.
\end{lemma}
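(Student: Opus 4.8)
The plan is to treat (\ref{thirdmodel}) as a Bernoulli-type scalar ODE and solve it explicitly, exactly in the spirit of the proof of Lemma \ref{l2}, then simply read off the limit as $t\to T^{-}$. Throughout I would write $I(t)=\int_0^t \frac{ds}{\mathcal{C}(s)}$, so that the hypothesis (\ref{improper}) becomes $I(t)\to+\infty$ as $t\to T^{-}$, while $I(t)>0$ for every $t\in(0,T)$ since $\mathcal{C}(\cdot)>0$. The analysis then splits according to the value of $p$, and only one of the cases will actually require $\delta_2$ to be large.

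First I would dispose of the linear case $p=1$. Here (\ref{thirdmodel}) separates as $\frac{dV}{V}=\bigl(\delta_1-\frac{\delta_2}{\mathcal{C}(t)}\bigr)dt$, giving $V(t)=V(0)\exp\bigl(\delta_1 t-\delta_2 I(t)\bigr)$. Since $\delta_1 t\le \delta_1 T$ stays bounded while $\delta_2 I(t)\to+\infty$, the exponent tends to $-\infty$ and $V(t)\to 0$; any $\delta_2>0$ suffices, so no largeness is needed. Next, for $p\neq 1$ I would substitute $W=V^{1-p}$, which linearizes (\ref{thirdmodel}) into $\dot W+(1-p)\frac{\delta_2}{\mathcal{C}(t)}W=(1-p)\delta_1$. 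With the integrating factor $\mu(t)=e^{(1-p)\delta_2 I(t)}$ this integrates to
\[
W(t)=\frac{V(0)^{1-p}+(1-p)\delta_1\int_0^t\mu(s)\,ds}{\mu(t)}.
\]
When $0<p<1$ we have $1-p>0$, hence $\mu(t)\to+\infty$; the term $V(0)^{1-p}/\mu(t)\to 0$, and an $\infty/\infty$ application of L'Hospital gives $\frac{\int_0^t\mu(s)\,ds}{\mu(t)}\to\frac{\mathcal{C}(t)}{(1-p)\delta_2}\to 0$. Thus $W(t)\to 0$ and therefore $V(t)=W(t)^{1/(1-p)}\to 0$, again for any $\delta_2>0$.

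The genuinely delicate case is $p>1$, where the superlinear term $\delta_1 V^p$ can drive a finite-time blow-up before $t$ reaches $T$, and this is exactly where ``large $\delta_2$'' enters. Now $1-p<0$, so $\mu(t)=e^{-(p-1)\delta_2 I(t)}\to 0$, and since $1/(1-p)<0$ we have $V=W^{1/(1-p)}\to 0$ if and only if $W\to+\infty$. Writing the numerator $N(t)=V(0)^{1-p}-(p-1)\delta_1\int_0^t\mu(s)\,ds$, one sees that $N$ is decreasing and that $V$ remains finite and positive on $[0,T)$ precisely while $N(t)>0$. Because $0<\mu(s)\le 1$ and $\mu(s)\to 0$ pointwise for each $s\in(0,T)$ as $\delta_2\to\infty$, dominated convergence yields $\int_0^T\mu(s)\,ds\to 0$; hence for $\delta_2$ large enough $(p-1)\delta_1\int_0^T\mu<V(0)^{1-p}$, which keeps $N(t)$ bounded below by a positive constant on $[0,T)$. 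Then $W(t)=N(t)/\mu(t)\to+\infty$ and $V(t)\to 0$, establishing PT stability.

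The hard part will be this last step: for $p>1$ one must rule out escape to infinity on $[0,T)$, and the only available lever is the magnitude of $\delta_2$, which shrinks $\int_0^T\mu$ and thereby prevents the trajectory from blowing up. If desired, the behaviour of $\dot V$ at $T$ can be examined afterwards by the same Taylor/L'Hospital reasoning used in Lemma \ref{l2}, but the core claim $\lim_{t\to T^{-}}V(t)=0$ follows from the explicit solution together with the divergence of $I(t)$ and the smallness of $\int_0^T\mu$ for large $\delta_2$.
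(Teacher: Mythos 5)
Your proposal is correct, and its skeleton is the same as the paper's: Bernoulli substitution $\bar V=V^{1-p}$, integrating factor $e^{(1-p)\delta_2 I(t)}$ with $I(t)=\int_0^t ds/\mathcal{C}(s)$, the explicit solution, and an $\infty/\infty$ L'Hospital step for $0<p<1$. The differences lie in the two end cases, and they are worth recording. For $p=1$ you solve directly, $V(t)=V(0)e^{\delta_1 t-\delta_2 I(t)}\to 0$, valid for every $\delta_2>0$; the paper instead absorbs $\delta_1$ into a modified regulator $\mathcal{C}^{\prime}(t)=\delta_2\mathcal{C}(t)/(\delta_2-\delta_1\mathcal{C}(t))$ and invokes Lemma~\ref{l2}, which tacitly requires $\delta_2>\delta_1\mathcal{C}(0)$ for $\mathcal{C}^{\prime}$ to stay positive, so your version is both simpler and slightly sharper. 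More importantly, for $p>1$ your argument repairs a loose step in the paper's own proof: the paper asserts $\lim_{s\to T}\int_0^s(1-p)\delta_1 e^{(1-p)\delta_2 I(t)}dt=0$, which is false whenever $\delta_1>0$ --- the integrand is continuous, negative and bounded, so the integral converges to a strictly negative constant --- and without controlling its magnitude the bracket $V^{1-p}(0)+\int_0^s(1-p)\delta_1 e^{(1-p)\delta_2 I(t)}dt$ could reach zero at some $t^{*}<T$, i.e.\ $V$ blows up before the prescribed time. Your dominated-convergence observation (that $\int_0^T e^{-(p-1)\delta_2 I(s)}ds\to 0$ as $\delta_2\to\infty$, so for large $\delta_2$ the numerator stays bounded below by a positive constant) is exactly the missing justification, and it makes transparent that this is the one place where ``large $\delta_2$'' is genuinely needed; it also reveals that the required size of $\delta_2$ depends on $V(0)$ through $V(0)^{1-p}$, a dependence the paper's shortcut conceals. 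The only item you defer is the paper's second step, the verification that $\dot V(T)=0$; since PT stability is defined in Lemma~\ref{l2} as $\lim_{t\to T^{-}}V(t)=0$, this does not affect the claim as stated, and your closing remark correctly notes it can be recovered by the same L'Hospital/Taylor reasoning.
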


\begin{proof}
If $p=1$, model (\ref{thirdmodel}) can be written as
\begin{align*}
\dot{V}(t)=\bigg(\delta_1-\frac{\delta_2}{\mathcal{C}(t)}\bigg)V(t)=-\frac{\delta_2}{\mathcal{C}^{\prime}(t)}V(t)
\end{align*}
where ${\mathcal{C}^{\prime}(t)}={\delta_2\mathcal{C}(t)}/(\delta_2-\delta_1\mathcal{C}(t))$, so this model has been investigated in the above lemma.

Hence, we consider $p\neq 1$ in the following. Let $\bar{V}(t)=V^{1-p}(t)$ at first, then we have
\begin{align*}
\dot{\bar{V}}(t)=\delta_1(1-p)-\frac{\delta_2(1-p)}{\mathcal{C}(t)}\bar{V}(t),
\end{align*}
This first-order differential equation can be solved, i.e.,
\begin{align}\label{iny}
V(s)=e^{-\int_0^s\frac{\delta_2}{\mathcal{C}(t)}dt}[\int_0^s (1 - p)\delta_1 e^{\int_0^t\frac{\delta_2(1 - p)}{\mathcal{C}(x)}dx}dt
+ V^{1 - p}(0)]^{\frac{1}{1 - p}}
\end{align}

Next, we will discuss PT stability in two steps. The first step is to investigate the value of ${V}(t)$ at time $T$. If $0 < p < 1$,
\begin{align*}
\lim_{s\to T} e^{-\int_0^s\frac{\delta_2(1 - p)}{\mathcal{C}(t)}dt} V^{1 - p}(0) = 0
\end{align*}
Moreover, by using L'Hospital rule, (\ref{iny}) will become
\begin{align}
{V}^{1 - p}(T) =\lim_{s\to T}\frac{\int_0^s(1 - p)\delta_1 e^{\int_0^t\frac{\delta_2(1 - p)}{\mathcal{C}(r)}dr}dt}{e^{\int_0^s\frac{\delta_2(1 - p)}{\mathcal{C}(t)}dt}}=\lim_{s\to T}\frac{(1 - p)\delta_1 e^{\int_0^s\frac{\delta_2(1 - p)}{\mathcal{C}(r)}dr}}{\frac{\delta_2(1 - p)}{\mathcal{C}(s)}e^{\int_0^s\frac{\delta_2(1 - p)}{\mathcal{C}(t)}dt}}
= \lim_{s\to T}\frac{\delta_1\mathcal{C}(s)}{\delta_2} = 0.\nonumber
\end{align}

Otherwise, if $p>1$, we know
\begin{align*}
\lim_{s\to T}\int_0^s (1 - p)\delta_1 e^{\int_0^t\frac{\delta_2(1 - p)}{\mathcal{C}(r)}dr}dt = 0
\end{align*}
Then, (\ref{iny}) will become
\begin{align*}
{V}^{1 - p}(T) = \lim_{s\to T}e^{\int_0^s\frac{\delta_2(p - 1)}{\mathcal{C}(t)}dt}V^{1 - p}(0) = +\infty
\end{align*}
Therefore, no matter the value of $p$ is, ${V}(T) = 0$.

The second step is to investigate the value of $\dot{{V}}(t)$ at $T$.
\begin{align*}
\dot{V}(T) = \lim_{s\to T}(\delta_1 V^p(s) - \frac{\delta_2V(s)}{\mathcal{C}(s)}) = -\delta_2\lim_{s\to T}\frac{V(s)}{\mathcal{C}(s)}
\end{align*}
Similar to Lemma \ref{l2}, it is clear that whether $\mathcal{C}(t) = T - t$, $(T - t)^\ell(\ell > 1)$ or other effective forms, $\dot{{V}}(T) = 0$ holds. Therefore, PT stability would be achieved.
\end{proof}

\begin{remark}
According to the Comparison Theorem, PT stability can also be achieved for the case that `$=$' is replaced by `$\le$' in equations (\ref{simplemodel}), (\ref{secondmodel}) and (\ref{thirdmodel}). Therefore, in the previous discussion to deal with the case $p=1$, one simpler method is directly to choose $\delta_2=\mathcal{C}(0)\delta_1+2$, then (\ref{secondmodel}) would become $\dot{V}(t)\le- \frac{2V(t)}{\mathcal{C}(t)}$, and according to Lemma \ref{l2}, we can also obtain the PT stability.
\end{remark}

\begin{remark}
The application of the above model includes the case that a function cannot satisfy the Lipschitz condition but with higher nonlinearities. Of course, we just consider a fraction of all combinations of parameters $p$ and $q$ for the general model: $\dot{V}(t)\leq \delta_1 V^p(t) - \frac{\delta_2 V^q(t)}{\mathcal{C}(t)}, p > 0, q > 0$. Interested readers are encouraged to study the other cases.
\end{remark}

\section{PT synchronization and control for MWDCNs}\label{MWD}
In this section, we will apply our obtained lemmas on the synchronization and control problems for MWDCNs.

\subsection{PT synchronization without control}
In this subsection, to realize PT synchronization only by mutual coupling, we design the network model as
\begin{align}\label{model0}
\dot{x}_i(t) = f(x_i(t)) + \frac{\eta}{\mathcal{C}(t)} \sum_{w = 1} ^ W \sum_{j = 1} ^ N M_{ij}^w \Gamma^w x_j(t)
\end{align}
where $x_i(t)=(x_i^1(t), x_i^2 (t), \cdots, x_i^n(t))^T \in R^n$
is the state of the $i$-th node; function $f(\cdot ):R^n \rightarrow R^n$ is continuous with Assumption \ref{as1}; $W>1$ means the multiple network topologies; and $\eta$ means the coupling strength. $M^w=(M^w_{ij})_{N \times N}$ stands for asymmetric OCMs with zero-row-sum, i.e., $\sum_{j=1}^NM^w_{ij}=0, \forall i$. Symmetric ICMs $\Gamma^w = (\gamma^w_{ij})\in R^{n\times n}$.

Next, we define the PT synchronization for (\ref{model0}).
\begin{definition}\label{pts}
Complex network (\ref{model0}) is said to realize PT synchronization globally if there exists a controller such that
\begin{align}\label{df2}
%\left\{
%\begin{array}{lr}
\lim_{t\rightarrow T}\|x_i(t)-x_j(t)\|_2 = 0
%,;\\
%x_i(t) = x_j(t), \quad \forall t\geq T;
%\end{array}
%\right.
\end{align}
for all $i,j \in \Omega $, and $T$ is independent of any parameter.
\end{definition}

For convenience of calculations, we define the regulator $\mathcal{C}(t)$ for PT synchronization as
\begin{align}\label{important}
\mathcal{C} (t) = (T - t)^\ell, ~~~~~~t \in [0, T),~~ \ell \ge1
\end{align}

The synchronization problem is investigated by considering each dimension of the state of all nodes separately, which is called rearranging variables' order technique (ROT) in \cite{lxw1}, and it can transform the MWDCNs into networks with a single weight for each dimension.

Therefore, by considering each dimension separately, we consider the dynamics of $x^d_i(t)$, which can be depicted as
\begin{align*}
\dot{x}_i^d(t)=f(x_i(t))^d+\frac{\eta}{\mathcal{C}(t)}\sum_{w=1}^W\sum_{j=1}^N M_{ij} ^w \bigg({\gamma}^w _{dd} x_j^d(t)+\sum_{e\neq d}{\gamma}^w_{de} x_j^e(t)\bigg),
\end{align*}
where $f(x_i(t))^d$ is the $d$-th dimension of $f(x_i(t)), d=1,2,\cdots,n$. For any dimension $d$, we denote
\begin{align}
&\mathcal{X}^{[d]}(t)=(x_1^d(t), x_2^d(t), \cdots, x_N^d(t))^T,\\
&\mathcal{F}^{[d]}(t)=(f(x_1(t))^d, f(x_2(t))^d,\cdots,f(x_N(t))^d)^T,
\end{align}
then
\begin{align}\label{transform}
\dot{\mathcal{X}}^{[d]}(t)=\mathcal{F}^{[d]}(t)+
\frac{\eta}{\mathcal{C}(t)}(M^{[dd]}\mathcal{X}^{[d]}(t)+\sum_{e\ne d}M^{[de]}\mathcal{X}^{[e]}(t)),
\end{align}
where sum (union) matrices mentioned above are denoted as
\begin{align}\label{s1}
%M^{[dd]}=\sum_{w=1}^W {\gamma}_{dd}^w M^w,~~
M^{[de]}=\sum_{w=1}^W {\gamma}_{de}^w M^w, ~~~~d, e=1,2,\cdots,n.
\end{align}

Since all $M^w$ are zero-row-sum matrices, $M^{[dd]}$ and $M^{[de]}$ defined in (\ref{s1}) are also zero-row-sum matrices. Furthermore,
\begin{assumption}\label{sc}
We assume $M^{[dd]}\in A_1, d=1,2,\cdots,n$.
\end{assumption}

Obviously, we just require the above assumption, so there is no requirement on single OCM $M^w$ and ICM $\Gamma^w$, elements $M_{ij}^w$ and $\gamma_{ij}^w$ can be positive, negative, or zero. Hence, this condition is more general and widely applicable than \cite{lxw1}.

Based on the results in \cite{WT1}, under Assumption \ref{sc}, the normalized left eigenvector (NLEVec) corresponding to eigenvalue $0$ of $M^{[dd]}$ exists, i.e., $(\psi^{[d]})^TM^{[dd]}=0$, where
\begin{align}\label{nlevec}
\psi^{[d]}=(\psi^{[d]}_1,\cdots,\psi^{[d]}_N)^T\in R^N,
\end{align}
with $\sum_{i=1}^N \psi_i^{[d]}= 1$ and $\psi_i^{[d]}>0, i=1,2,\cdots, N$.
Define $\Psi^{[d]}=\mathrm{diag}(\psi^{[d]})$, then symmetric matrices $\mathcal{I}_{\psi^{[d]}}=\Psi^{[d]}-\psi^{[d]}(\psi^{[d]})^T\in A_1$. Denote
\begin{align}
\mathcal{I} _\Psi=&\mathrm{diag}(\mathcal{I}_{\psi^{[1]}}, \mathcal{I}_{\psi^{[2]}}, \cdots, \mathcal{I}_{\psi^{[n]}}),\label{ipsi}\\
M=&\begin{pmatrix}
M^{[11]}&\cdots&M^{[1n]}\\
\vdots&\ddots&\vdots\\
M^{[n1]}&\cdots&M^{[nn]}
\end{pmatrix}.\label{expand}
\end{align}

\begin{theorem}\label{PT-syn}
For the MWDCN (\ref{model0}) with $\mathcal{C}(t)$ defined in (\ref{important}), Assumption \ref{as1} and Assumption \ref{sc} hold, if the following matrix
\begin{align}\label{expand-syn}
\overline{M}=(\mathcal{I} _\Psi M+M^T\mathcal{I} _\Psi)/2
\end{align}
is negative definite in the transverse space $\mathcal{TS}=\{\mathcal{R}\in R^N|\mathcal{R}^T\textbf{1}=0\}$, where $\textbf{1}=(1,\cdots,1)^T$, then PT synchronization can be realized with $\eta>(\mathcal{H}_f\mathcal{C}(0)+1)/|\lambda_2(\overline{M})|$, where $\lambda_2(\cdot)$ signifies the second largest eigenvalue in the whole space which is also called {Fiedler eigenvalue} \cite{RR}.
\end{theorem}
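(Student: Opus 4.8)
The plan is to introduce a single quadratic Lyapunov functional measuring the synchronization error over all $n$ dimensions at once, differentiate it along the reduced dynamics (\ref{transform}), and force the result into the scalar differential inequality handled by Lemma~\ref{l2}. Stacking the dimensions as $\mathcal{X}(t)=((\mathcal{X}^{[1]})^T,\dots,(\mathcal{X}^{[n]})^T)^T$ and $\mathcal{F}(t)=((\mathcal{F}^{[1]})^T,\dots,(\mathcal{F}^{[n]})^T)^T$, I would set
\[
V(t)=\tfrac12\,\mathcal{X}(t)^T\mathcal{I}_\Psi\,\mathcal{X}(t)=\tfrac12\sum_{d=1}^n(\mathcal{X}^{[d]})^T\mathcal{I}_{\psi^{[d]}}\mathcal{X}^{[d]}.
\]
Since each $\mathcal{I}_{\psi^{[d]}}=\Psi^{[d]}-\psi^{[d]}(\psi^{[d]})^T$ is symmetric positive semidefinite, annihilates $\textbf{1}_N$, and is positive definite on $\mathcal{TS}$, $V$ equals a weighted variance of the entries of each $\mathcal{X}^{[d]}$; hence $V(t)\ge 0$ and $V(t)=0$ exactly when $x_i(t)=x_j(t)$ for all $i,j$. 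Because every $M^{[de]}$ is zero-row-sum, the synchronization manifold (all blocks proportional to $\textbf{1}_N$) is invariant for (\ref{model0}), so proving $\lim_{t\to T^-}V(t)=0$ is precisely Definition~\ref{pts}, the synchronization here being driven purely by the time-varying coupling gain $\eta/\mathcal{C}(t)$.

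Differentiating and using the symmetry of $\mathcal{I}_\Psi$ gives $\dot V=\mathcal{X}^T\mathcal{I}_\Psi\mathcal{F}+\tfrac{\eta}{\mathcal{C}(t)}\mathcal{X}^T\mathcal{I}_\Psi M\mathcal{X}$, where $M$ is the block matrix (\ref{expand}); after symmetrization the second term is $\tfrac{\eta}{\mathcal{C}(t)}\mathcal{X}^T\overline{M}\mathcal{X}$. For this coupling term I would first observe that $\overline{M}$ annihilates the synchronization manifold (both $\mathcal{I}_\Psi$ and $M$ do, the latter by zero-row-sum), so $\mathcal{X}^T\overline{M}\mathcal{X}$ depends only on the transverse component $\mathcal{X}_\perp\in\mathcal{TS}$ and satisfies $\mathcal{X}^T\overline{M}\mathcal{X}\le\lambda_2(\overline{M})\|\mathcal{X}_\perp\|_2^2$ with $\lambda_2(\overline{M})<0$ the Fiedler eigenvalue. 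Separately, the normalization $\sum_i\psi_i^{[d]}=1$ with positive entries forces $\max_i\psi_i^{[d]}<1$, hence $\lambda_{\max}(\mathcal{I}_\Psi)<1$ and $2V=\mathcal{X}^T\mathcal{I}_\Psi\mathcal{X}\le\|\mathcal{X}_\perp\|_2^2$; combining these two facts yields $\tfrac{\eta}{\mathcal{C}(t)}\mathcal{X}^T\overline{M}\mathcal{X}\le-\tfrac{2\eta|\lambda_2(\overline{M})|}{\mathcal{C}(t)}V$.

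The decisive and most delicate step is the nonlinear term $\mathcal{X}^T\mathcal{I}_\Psi\mathcal{F}$. Using invariance, I would fix a reference $s(t)$ with $\dot s=f(s)$ and subtract it off dimension-wise (legitimate since $\mathcal{I}_{\psi^{[d]}}\textbf{1}_N=0$), reducing the term to $\sum_d(\mathcal{X}^{[d]}-s^d\textbf{1}_N)^T\mathcal{I}_{\psi^{[d]}}(\mathcal{F}^{[d]}-f(s)^d\textbf{1}_N)$, with the goal of bounding it by $\mathcal{H}_f\,\mathcal{X}^T\mathcal{I}_\Psi\mathcal{X}=2\mathcal{H}_f V$ through Assumption~\ref{as1}. \emph{This is where the main obstacle lies}: Assumption~\ref{as1} bounds $(x_i-s)^T(f(x_i)-f(s))$, i.e.\ the inner product summed over all $n$ dimensions for a fixed node $i$, whereas the left eigenvectors $\psi^{[d]}$ — and hence the effective weights $\psi_i^{[d]}$ attached to each dimension — differ across $d$, so the QUAD estimate cannot be applied dimension-by-dimension without additional care. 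Carrying out the weighted bookkeeping so that these differing $\psi^{[d]}$ still collapse into the clean coefficient $2\mathcal{H}_f$ is exactly what the rearranging-variables'-order technique and the structure of the sum matrices $M^{[de]}$ are designed to accomplish, following \cite{lxw1}, and is the part requiring genuinely new work.

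Putting the two estimates together gives the scalar inequality $\dot V\le 2\mathcal{H}_f V-\tfrac{2\eta|\lambda_2(\overline{M})|}{\mathcal{C}(t)}V$. Since $\mathcal{C}(t)=(T-t)^\ell$ attains its maximum $\mathcal{C}(0)$ at $t=0$, the hypothesis $\eta>(\mathcal{H}_f\mathcal{C}(0)+1)/|\lambda_2(\overline{M})|$ yields $2\eta|\lambda_2(\overline{M})|\ge 2\mathcal{H}_f\mathcal{C}(t)+2$ for every $t\in[0,T)$, whence $\dot V\le-\tfrac{2}{\mathcal{C}(t)}V$. As (\ref{improper}) holds for $\mathcal{C}(t)=(T-t)^\ell$ with $\ell\ge1$, Lemma~\ref{l2} — invoked via the Comparison-Theorem remark permitting `$\le$' in (\ref{simplemodel}), or equivalently Lemma~\ref{ll5} with $p=1$ — forces $\lim_{t\to T^-}V(t)=0$ together with $\dot V(T)=0$. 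Therefore $V(T)=0$, i.e.\ PT synchronization is reached at the prescribed $T$, independently of the initial states and of $\eta$ and $\ell$. The coupling and reduction steps are routine once the functional $V$ and the eigenvalue bounds above are in place; the only nontrivial ingredient is the multiweighted nonlinear estimate flagged in the third paragraph.
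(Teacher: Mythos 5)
Your proposal follows essentially the same route as the paper's own proof: your $V(t)=\tfrac12\mathcal{X}(t)^T\mathcal{I}_\Psi\mathcal{X}(t)$ is exactly the paper's Lyapunov function $\mathcal{W}(t)$ in (\ref{lyps1}) (the paper writes it equivalently through the dummy targets $x_\star^d=\sum_i\psi_i^{[d]}x_i^d$), the symmetrization to $\overline{M}$ is the same, and your final scalar inequality is the paper's (\ref{cou}). The paper closes it by invoking Lemma \ref{ll5} with $\delta_1=2\mathcal{H}_f$, $\delta_2=-2\lambda_2(\overline{M})\eta$ and ``$\eta$ large enough,'' while you close it by the equivalent comparison route of Remark 2: $\eta|\lambda_2(\overline{M})|>\mathcal{H}_f\mathcal{C}(0)+1$ gives $\dot V\le -2V/\mathcal{C}(t)$, then Lemma \ref{l2} applies; this is fine, and it actually pins down the stated threshold more precisely than the paper's wording. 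Your handling of the coupling term is also more explicit than the paper's: the passage from $\mathcal{X}^T\overline{M}\mathcal{X}$ to $2\lambda_2(\overline{M})\mathcal{W}$ is compressed in the paper into ``based on Lemma \ref{l1},'' whereas you supply the two facts that make it legitimate ($\overline{M}$ annihilates block-constant vectors, and $\lambda_{\max}(\mathcal{I}_\Psi)<1$ so that $2V\le\|\mathcal{X}_\perp\|_2^2$ for the block-transverse component $\mathcal{X}_\perp$).

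The one place where your proof is not complete is the nonlinear term, and your suspicion about it is well founded: neither your deferral to ROT and \cite{lxw1} nor the paper's own one-line assertion actually establishes $\sum_d(\mathcal{X}^{[d]})^T\mathcal{I}_{\psi^{[d]}}\mathcal{F}^{[d]}\le 2\mathcal{H}_f\mathcal{W}$ when the left eigenvectors $\psi^{[d]}$ differ across dimensions. ROT only reorganizes the coupling into the sum matrices $M^{[de]}$; it does nothing to the nonlinearity. And Assumption \ref{as1} alone is insufficient: take $f(x)=K(x^2,-x^1)^T$, which satisfies Assumption \ref{as1} for every $\mathcal{H}_f>0$ because the inner product is identically zero; the term then becomes $K\sum_i(\psi_i^{[1]}-\psi_i^{[2]})(x_i^1-x_\star^1)(x_i^2-x_\star^2)$, which for $\psi^{[1]}\neq\psi^{[2]}$ (as in the paper's own simulation, where the three NLEVecs are all different) can be made arbitrarily large relative to $2\mathcal{H}_f\mathcal{W}$ by increasing $K$. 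Closing this step requires something extra: identical NLEVecs across all dimensions, a component-wise $f$, or a weighted (per-dimension) strengthening of Assumption \ref{as1}. To be clear, this hole sits in the paper's proof of (\ref{cou}) just as much as in your proposal — the paper states the same bound with no justification — but since you explicitly identified the step as requiring new work and then did not carry it out, your proposal as written is not a complete proof either.
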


\begin{proof}
Using NLEVec $\psi^{[d]}$ in (\ref{nlevec}), we can define dummy synchronization targets as $x^d_{\star}(t)=\sum_{i=1}^N\psi_i^{[d]}x_i^d(t)$. Thus, the PT synchronization in Definition \ref{pts} is equivalent to prove that
$\lim_{t\to T}|x_i^d(t)-x^d_{\star}(t)|=0, d=1,2,\cdots,n$.

Define the Lyapunov function
\begin{align}\label{lyps1}
\mathcal{W}(t) = \frac{1}{2} \sum_{i = 1}^N \sum_{d = 1}^n \psi_i^{[d]}(x_i^d(t) - x^d_{\star}(t))^2= \frac{1}{2} \sum_{d = 1}^n \mathcal{X}^{[d]}(t)^T \mathcal{I}_{\psi^{[d]}} \mathcal{X}^{[d]}(t)
\end{align}
Thus, based on Lemma \ref{l1}, $\dot{\mathcal{W}}(t)$ along (\ref{transform}) satisfies:
\begin{align}
\dot{\mathcal{W}}(t) =& \sum_{d = 1}^n \mathcal{X}^{[d]}(t)^T \mathcal{I}_{\psi^{[d]}} \dot{\mathcal{X}}^{[d]}(t)\le2\mathcal{H}_f\mathcal{W}(t)+ \frac{\eta}{\mathcal{C}(t)}\mathcal{X}(t)^T\overline{M}\mathcal{X}(t)\nonumber\\
\le&2\mathcal{H}_f\mathcal{W}(t)+ 2\lambda_{2}(\overline{M})\frac{\eta}{\mathcal{C}(t)}\mathcal{W}(t),\label{cou}
\end{align}
where $\mathcal{X}(t)=(\mathcal{X}^{[1]}(t)^T,\cdots,\mathcal{X}^{[n]}(t)^T)^T$.

The form (\ref{cou}) is the same as that in Lemma \ref{ll5} if we set $\delta_1= 2\mathcal{H}_f$ and $\delta_2=-2\lambda_{2}(\overline{M}){\eta}$. According to Lemma \ref{ll5}, if $\eta$ is large enough, PT stability for $\mathcal{W}(t)$ holds, that is to say, $\mathcal{W}(t)$ would converge to zero when $t\to T$ and the derivative of $\mathcal{W}(t)$ is also zero. Therefore, from the concrete form of $\mathcal{W}(t)$ defined in (\ref{lyps1}), PT synchronization is realized.
\end{proof}

Recalling the prerequisite that matrix $\overline{M}$ defined in (\ref{expand-syn}) should be negative definite in the transverse space, we know that the condition cannot be ensured for any matrix, therefore, in the following, we will consider a special case: all ICMs are diagonal, i.e., $\Gamma^w=\mathrm{diag}(\gamma^w_{11},\cdots,\gamma^w_{nn})$. In this case, sum matrices defined in (\ref{s1}) satisfy: $M^{[de]}=0$ for any $d\ne e$, and the matrix $M$ defined in (\ref{expand}) would be: $M=\mathrm{diag}(M^{[11]},\cdots,M^{[nn]})$. According to \cite{WT1}, matrices
\begin{align}\label{zuhe}
_{\psi^{[d]}} M^{[dd]} = \frac{\Psi^{[d]} M^{[dd]} + (M^{[dd]})^T \Psi^{[d]}}{2}
\end{align}
are symmetric and negative definite in $\mathcal{TS}$, which means that matrix $\overline{M}$ defined in (\ref{expand-syn}) is surely negative definite in $\mathcal{TS}$.

\begin{theorem}\label{diag-syn}
For the MWDCN (\ref{model0}) with diagonal ICMs, suppose Assumption \ref{as1} and Assumption \ref{sc} hold, then PT synchronization can be realized with $\eta>(\mathcal{H}_f\mathcal{C}(0)+1)/|\lambda_2(\overline{M})|$.
\end{theorem}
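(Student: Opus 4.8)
The plan is to reduce Theorem \ref{diag-syn} to the already-established Theorem \ref{PT-syn} by showing that the diagonal-ICM hypothesis automatically supplies the only nontrivial condition that Theorem \ref{PT-syn} demanded, namely negative definiteness of $\overline{M}$ on the transverse space $\mathcal{TS}$. The key observation, which the excerpt has essentially set up, is that when every $\Gamma^w$ is diagonal the cross-dimension sum matrices vanish: $M^{[de]}=\sum_{w=1}^W \gamma^w_{de}M^w=0$ for $d\ne e$, since $\gamma^w_{de}=0$. Consequently the big matrix $M$ in (\ref{expand}) is block-diagonal, $M=\mathrm{diag}(M^{[11]},\cdots,M^{[nn]})$, and because $\mathcal{I}_\Psi$ in (\ref{ipsi}) is also block-diagonal, the symmetrized matrix $\overline{M}=(\mathcal{I}_\Psi M+M^T\mathcal{I}_\Psi)/2$ decomposes into the diagonal blocks $\tfrac{1}{2}\big(\Psi^{[d]}M^{[dd]}+(M^{[dd]})^T\Psi^{[d]}\big)$ displayed in (\ref{zuhe}).

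First I would make this block decomposition explicit and record that each diagonal block equals the matrix ${}_{\psi^{[d]}}M^{[dd]}$ from (\ref{zuhe}). Next I would invoke Assumption \ref{sc}, which states $M^{[dd]}\in A_1$, and cite the result of \cite{WT1} already quoted above: for a matrix in $A_1$ with normalized left eigenvector $\psi^{[d]}$, the symmetrized matrix ${}_{\psi^{[d]}}M^{[dd]}$ is symmetric and negative definite on $\mathcal{TS}$. The only point requiring a word of care is that the transverse space for each block is $\{\mathcal{R}\in R^N\mid \mathcal{R}^T\mathbf{1}=0\}$, and I would note that a block-diagonal matrix is negative definite on $\mathcal{TS}=\{\mathcal{R}\in R^{nN}\mid \mathcal{R}^T\mathbf{1}=0\}$ whenever each block is negative definite on its own $N$-dimensional transverse space, so the full $\overline{M}$ inherits negative definiteness on the transverse space of the product. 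This verifies precisely the hypothesis of Theorem \ref{PT-syn}.

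With negative definiteness of $\overline{M}$ on $\mathcal{TS}$ in hand, the remainder is immediate: all other hypotheses of Theorem \ref{PT-syn} (Assumptions \ref{as1} and \ref{sc}, and the regulator $\mathcal{C}(t)$ of form (\ref{important})) are identical to those assumed here, so I would simply apply Theorem \ref{PT-syn} verbatim to conclude that PT synchronization is realized under the same coupling-strength bound $\eta>(\mathcal{H}_f\mathcal{C}(0)+1)/|\lambda_2(\overline{M})|$. I do not expect any genuine obstacle in this argument; the one step that deserves explicit justification rather than a hand-wave is the reduction of negative definiteness on the joint transverse space to blockwise negative definiteness, since a naive argument might worry that mixing coordinates from different blocks could spoil the sign. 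That worry is dispelled by the block-diagonal structure, because for any $\mathcal{R}=(\mathcal{R}^{[1]T},\cdots,\mathcal{R}^{[n]T})^T$ one has $\mathcal{R}^T\overline{M}\mathcal{R}=\sum_{d=1}^n \mathcal{R}^{[d]T}\,{}_{\psi^{[d]}}M^{[dd]}\,\mathcal{R}^{[d]}$, a sum with no cross terms, so it suffices to control each summand on its own transverse component.
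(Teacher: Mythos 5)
Your overall strategy is precisely the paper's own: the paper ``proves'' Theorem~\ref{diag-syn} in the paragraph preceding its statement, by noting that diagonal ICMs force $M^{[de]}=\sum_{w}\gamma^w_{de}M^w=0$ for $d\ne e$, so that $M$ in (\ref{expand}) is block-diagonal, citing \cite{WT1} for negative definiteness of the blocks (\ref{zuhe}) on the transverse space, and then appealing to Theorem~\ref{PT-syn}. One detail you glossed over: identifying the diagonal blocks of $\overline{M}$ with the matrices ${}_{\psi^{[d]}}M^{[dd]}$ of (\ref{zuhe}) silently replaces $\mathcal{I}_{\psi^{[d]}}$ by $\Psi^{[d]}$; this is correct, but only because $(\psi^{[d]})^TM^{[dd]}=0$ gives $\mathcal{I}_{\psi^{[d]}}M^{[dd]}=\Psi^{[d]}M^{[dd]}$, and it deserves a line of justification.

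There is, however, a genuine error in the one step you singled out as needing care. You claim that blockwise negative definiteness on $\{r\in R^N: r^T\mathbf{1}_N=0\}$ yields negative definiteness of $\overline{M}$ on $\{\mathcal{R}\in R^{nN}:\mathcal{R}^T\mathbf{1}_{nN}=0\}$. That is false. Every block annihilates $\mathbf{1}_N$: indeed ${}_{\psi^{[d]}}M^{[dd]}\,\mathbf{1}_N=\frac{1}{2}\big(\Psi^{[d]}M^{[dd]}\mathbf{1}_N+(M^{[dd]})^T\psi^{[d]}\big)=0$, since $M^{[dd]}$ has zero row sums and $\psi^{[d]}$ is its left null vector. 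Hence for $n\ge 2$ the nonzero vector $\mathcal{R}=(\mathbf{1}_N^T,-\mathbf{1}_N^T,0,\cdots,0)^T$ lies in your transverse space, yet your own no-cross-terms identity gives $\mathcal{R}^T\overline{M}\mathcal{R}=0$; so under your reading $\overline{M}$ is \emph{never} negative definite there, and the reduction to Theorem~\ref{PT-syn} would be vacuous. The repair is to read the transverse space the way the Lyapunov estimate (\ref{cou}) actually requires, namely as the product space $\{\mathcal{R}=(\mathcal{R}^{[1]T},\cdots,\mathcal{R}^{[n]T})^T : (\mathcal{R}^{[d]})^T\mathbf{1}_N=0,\ d=1,\cdots,n\}$ (the paper's definition of $\mathcal{TS}$ as a subset of $R^N$ is dimensionally sloppy, but this is the only reading under which Theorem~\ref{PT-syn} makes sense); correspondingly $\lambda_2(\overline{M})$ must be understood as $\max_d\lambda_2\big({}_{\psi^{[d]}}M^{[dd]}\big)$, i.e., the largest eigenvalue after discarding the $n$ trivial zero eigenvalues. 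On that product space your identity $\mathcal{R}^T\overline{M}\mathcal{R}=\sum_{d=1}^n (\mathcal{R}^{[d]})^T\,{}_{\psi^{[d]}}M^{[dd]}\,\mathcal{R}^{[d]}$ does immediately give the conclusion from the blockwise statement, and with that correction your argument coincides with the paper's.
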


\subsection{PT synchronization with pinning control}
In this subsection, we will consider the pinning control problem, whose network model can be described as
\begin{align}\label{model1}
\dot{x}_i(t)=f(x_i(t))+\frac{\eta}{\mathcal{C}(t)}\sum_{w=1}^W \sum_{j=1}^N M_{ij}^w \Gamma^w x_j(t)-\kappa_i\frac{\eta}{\mathcal{C}(t)}\Gamma(x_i(t)-x_0(t));
\end{align}
where the meaning of parameters is the same with those in model (\ref{model0}), and $\kappa _i=1$ if $i=1$ (the first node is pinned), otherwise $\kappa_i=0$; symmetric matrix $\Gamma=(\gamma_{ij})$ can be independent of ICMs $\Gamma^w$; $x_0(t)$ is the control target satisfying
\begin{align}\label{st}
\dot{x}_0(t)=f(x_0(t)).
\end{align}

Next, we define the PT synchronization for (\ref{model1}).
\begin{definition}\label{ptc}
Complex network (\ref{model1}) is said to realize PT synchronization globally if there exists a controller such that
\begin{align}\label{df3}
%\left\{
%\begin{array}{lr}
\lim_{t\rightarrow T}\|x_i(t) - x_0(t)\|_2 = 0
%;\\
%x_i(t) = x_0(t), \quad \forall t\geq T;
%\end{array}
%\right.
\end{align}
for all $i,j \in \Omega $, and $T$ is independent of any parameter.
\end{definition}

For any dimension $d=1, 2, \cdots, n$, we denote
\begin{align}
&\mathcal{Y}^{[d]}(t)=(x_1^d(t)-x_0^d(t), \cdots, x_N^d(t)-x_0^d(t))^T,\\
&\hat{\mathcal{F}}^{[d]}(t)=(f(x_1(t))^d-f(x_0(t))^d,\cdots,f(x_N(t))^d-f(x_0(t))^d)^T,
\end{align}
then
\begin{align}\label{transform2}
\dot{\mathcal{Y}}^{[d]}(t)=\hat{\mathcal{F}}^{[d]}(t)+
\frac{\eta}{\mathcal{C}(t)}(\hat{M}^{[dd]}\mathcal{Y}^{[d]}(t)+\sum_{e\ne d}\hat{M}^{[de]}\mathcal{Y}^{[e]}(t)),
\end{align}
where sum (union) matrices mentioned above are denoted as
\begin{align}\label{s2}
\hat{M}^{[de]}=M^{[de]}-\mathrm{diag}(\gamma_{de},0,\cdots,0),
\end{align}
where $M^{[de]}$ is defined in (\ref{s1}).

According to \cite{CTP1}, under Assumption \ref{sc}, the NLEVec $\psi^{[d]}$ of $M^{[dd]}$ exists, which can make the new matrices
\begin{align}\label{zuhe1}
_{\psi^{[d]}} \hat{M}^{[dd]} = \frac{\Psi^{[d]} \hat{M}^{[dd]} + (\hat{M}^{[dd]})^T \Psi^{[d]}}{2}
\end{align}
be symmetric and negative definite, where $\Psi^{[d]}=\mathrm{diag}(\psi^{[d]})$.

\begin{theorem}
For the MWDCNs (\ref{model1}) with $\mathcal{C}(t)$ defined in (\ref{important}), Assumption \ref{as1} and \ref{sc} hold, if the following matrix
\begin{align}\label{expand-cont}
\widetilde{M}=(\Psi \hat{M}+\hat{M}^T\Psi)/2
\end{align}
is negative definite, where $\Psi=\mathrm{diag}(\Psi^{[1]},\cdots,\Psi^{[n]})$, and
\begin{align}
\hat{M}=&\begin{pmatrix}
\hat{M}^{[11]}&\cdots&\hat{M}^{[1n]}\\
\vdots&\ddots&\vdots\\
\hat{M}^{[n1]}&\cdots&\hat{M}^{[nn]}
\end{pmatrix},\label{expand2}
\end{align}
then PT synchronization can be realized with $\eta>(\mathcal{H}_f\mathcal{C}(0)+1)/|\lambda_{\max}(\widetilde{M})|$, where $\lambda_{\max}(\cdot)$ signifies the largest eigenvalue.
\end{theorem}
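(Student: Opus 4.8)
The plan is to mirror the proof of Theorem~\ref{PT-syn} almost verbatim, exploiting the fact that the pinning control term has already been absorbed into the augmented sum matrices $\hat{M}^{[de]}$ defined in (\ref{s2}). First I would introduce the Lyapunov function
\begin{align*}
\mathcal{W}(t) = \frac{1}{2}\sum_{d=1}^n \mathcal{Y}^{[d]}(t)^T \Psi^{[d]} \mathcal{Y}^{[d]}(t),
\end{align*}
where, crucially, I use the full diagonal weight $\Psi^{[d]}$ rather than the centered matrix $\mathcal{I}_{\psi^{[d]}}$ used in Theorem~\ref{PT-syn}. This is the essential structural difference: because (\ref{model1}) drives every node toward the \emph{external} target $x_0(t)$ rather than toward a weighted average, the synchronization error $\mathcal{Y}^{[d]}(t)$ need not lie in the transverse space $\mathcal{TS}$, so the decay must be extracted from negative definiteness of $\widetilde{M}$ on the \emph{whole} space, which is exactly the hypothesis stated here.

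The key steps are as follows. First I differentiate $\mathcal{W}(t)$ along the error dynamics (\ref{transform2}). The continuous-dynamics term $\hat{\mathcal{F}}^{[d]}(t)$ is handled by Assumption~\ref{as1}: since $(x_i-x_0)^T(f(x_i)-f(x_0))\le \mathcal{H}_f(x_i-x_0)^T(x_i-x_0)$, summing over $i$ and $d$ with weights $\psi_i^{[d]}$ gives a bound of the form $2\mathcal{H}_f\mathcal{W}(t)$. Next I collect the coupling-plus-pinning term; writing $\mathcal{Y}(t)=(\mathcal{Y}^{[1]}(t)^T,\cdots,\mathcal{Y}^{[n]}(t)^T)^T$, the quadratic form built from the blocks $\hat{M}^{[de]}$ and the weights $\Psi^{[d]}$ symmetrizes precisely into $\frac{\eta}{\mathcal{C}(t)}\mathcal{Y}(t)^T\widetilde{M}\,\mathcal{Y}(t)$ with $\widetilde{M}$ as in (\ref{expand-cont}). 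Applying Lemma~\ref{l1} and using that $\widetilde{M}$ is negative definite with largest eigenvalue $\lambda_{\max}(\widetilde{M})<0$, this term is bounded above by $2\lambda_{\max}(\widetilde{M})\frac{\eta}{\mathcal{C}(t)}\mathcal{W}(t)$. Combining yields
\begin{align*}
\dot{\mathcal{W}}(t)\le 2\mathcal{H}_f\mathcal{W}(t)+2\lambda_{\max}(\widetilde{M})\frac{\eta}{\mathcal{C}(t)}\mathcal{W}(t),
\end{align*}
which is exactly the form (\ref{thirdmodel}) of Lemma~\ref{ll5} with $p=1$, $\delta_1=2\mathcal{H}_f$, and $\delta_2=-2\lambda_{\max}(\widetilde{M})\eta>0$. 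Finally I invoke Lemma~\ref{ll5}, noting that the threshold $\eta>(\mathcal{H}_f\mathcal{C}(0)+1)/|\lambda_{\max}(\widetilde{M})|$ makes $\delta_2$ large enough (indeed it realizes the clean choice $\delta_2\ge \mathcal{C}(0)\delta_1+2$ of the Remark following Lemma~\ref{ll5}), so that $\mathcal{W}(t)\to 0$ and $\dot{\mathcal{W}}(T)=0$ as $t\to T$; since $\psi_i^{[d]}>0$, this forces $|x_i^d(t)-x_0^d(t)|\to 0$ for all $i,d$, which is (\ref{df3}).

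The main obstacle I expect is the algebraic verification that the weighted coupling-plus-pinning term symmetrizes exactly into $\mathcal{Y}(t)^T\widetilde{M}\,\mathcal{Y}(t)$. Two points require care: one must check that the block-diagonal weight $\Psi=\mathrm{diag}(\Psi^{[1]},\cdots,\Psi^{[n]})$ acting on the full augmented matrix $\hat{M}$ in (\ref{expand2}) correctly reproduces, block by block, the symmetrized forms $_{\psi^{[d]}}\hat{M}^{[dd]}$ of (\ref{zuhe1}) on the diagonal together with the off-diagonal couplings $\hat{M}^{[de]}$; and one must confirm that the pinning correction $-\mathrm{diag}(\gamma_{de},0,\cdots,0)$ hidden inside $\hat{M}^{[de]}$ is what upgrades negative \emph{semi}definiteness on $\mathcal{TS}$ (as in Theorem~\ref{PT-syn}) to negative definiteness on the whole space, so that $\lambda_{\max}(\widetilde{M})<0$ is a genuine hypothesis rather than an automatic consequence. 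Once this symmetrization is in hand, the remainder of the argument is a direct application of Lemma~\ref{ll5} and is essentially identical in form to the proof of Theorem~\ref{PT-syn}.
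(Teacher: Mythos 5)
Your proposal is correct and follows essentially the same route as the paper's own proof: the identical Lyapunov function $\mathcal{W}(t)=\frac{1}{2}\sum_{d=1}^n \mathcal{Y}^{[d]}(t)^T\Psi^{[d]}\mathcal{Y}^{[d]}(t)$, the same bound $\dot{\mathcal{W}}(t)\le 2\mathcal{H}_f\mathcal{W}(t)+2\lambda_{\max}(\widetilde{M})\frac{\eta}{\mathcal{C}(t)}\mathcal{W}(t)$ obtained via Assumption \ref{as1} and Lemma \ref{l1}, and the same appeal to Lemma \ref{ll5} with $\delta_1=2\mathcal{H}_f$, $\delta_2=-2\lambda_{\max}(\widetilde{M})\eta$. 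Your added observations (that the threshold on $\eta$ realizes $\delta_2>\mathcal{C}(0)\delta_1+2$, and that full-space negative definiteness of $\widetilde{M}$ replaces the transverse-space condition of Theorem \ref{PT-syn}) are accurate refinements of details the paper leaves implicit by citing ``the same arguments as Theorem \ref{PT-syn}.''
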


\begin{proof}
PT synchronization in Definition \ref{ptc} is equivalent to prove that
$\lim_{t\to T}|x_i^d(t)-x^d_{0}(t)|=0, d=1,2,\cdots,n$.
Therefore, we choose the Lyapunov function as
\begin{align}\label{lyp1}
\mathcal{W}(t)=\frac{1}{2}\sum_{i=1}^N\sum_{d=1}^n \psi_i^{[d]}(x_i^d(t)-x_0^d(t))^2=\frac{1}{2}\sum_{d=1}^n \mathcal{Y}^{[d]}(t)^T\Psi^{[d]}\mathcal{Y}^{[d]}(t)
\end{align}
Thus, based on Lemma \ref{l1}, $\dot{\mathcal{W}}(t)$ along (\ref{transform2}) satisfies:
\begin{align}
\dot{\mathcal{W}}(t) =& \sum_{d=1}^n \mathcal{Y}^{[d]}(t)^T\Psi^{[d]}\dot{\mathcal{Y}}^{[d]}(t)\le2\mathcal{H}_f\mathcal{W}(t)+ \frac{\eta}{\mathcal{C}(t)}\mathcal{Y}(t)^T\widetilde{M}\mathcal{Y}(t)\nonumber\\
\le&2\mathcal{H}_f\mathcal{W}(t)+ 2\lambda_{\max}(\widetilde{M})\frac{\eta}{\mathcal{C}(t)}\mathcal{W}(t),\label{cou2}
\end{align}
where $\mathcal{Y}(t)=(\mathcal{Y}^{[1]}(t)^T,\cdots,\mathcal{Y}^{[n]}(t)^T)^T$.

With the same arguments as those in Theorem \ref{PT-syn}, we can also get the PT synchronization for the MWDCN (\ref{model1}).
\end{proof}

Similarly, when all ICMs are diagonal, i.e., $\Gamma^w=\mathrm{diag}(\gamma^w_{11},\cdots,\gamma^w_{nn})$ and $\Gamma=\mathrm{diag}(\gamma_{11},\cdots,\gamma_{nn})$, $\hat{M}^{[de]}=0, d\ne e$, therefore, $\hat{M}=\mathrm{diag}(\hat{M}^{[11]},\cdots,\hat{M}^{[nn]})$, according to (\ref{zuhe1}), the matrix $\widetilde{M}$ would be negative definite surely. Therefore, we have the following result.

\begin{theorem}\label{diag-ctrl}
For the MWDCN (\ref{model1}) with diagonal ICMs, suppose Assumption \ref{as1} and \ref{sc} hold, then PT synchronization can be realized with $\eta>(\mathcal{H}_f\mathcal{C}(0)+1)/|\lambda_{\max}(\widetilde{M})|$.
\end{theorem}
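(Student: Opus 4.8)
The plan is to prove Theorem~\ref{diag-ctrl} as a direct specialization of the general pinning-control result established just above, whose only non-structural hypothesis is the negative definiteness of $\widetilde{M}$ in (\ref{expand-cont}). Thus the whole task reduces to verifying that, when every ICM is diagonal, this negative definiteness holds \emph{automatically}, so that no condition beyond Assumptions~\ref{as1} and \ref{sc} is required.

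First I would exploit the diagonal structure $\Gamma^w=\mathrm{diag}(\gamma^w_{11},\cdots,\gamma^w_{nn})$ and $\Gamma=\mathrm{diag}(\gamma_{11},\cdots,\gamma_{nn})$. From the definition (\ref{s1}) of the sum matrices $M^{[de]}=\sum_{w=1}^W\gamma^w_{de}M^w$ together with the pinning correction (\ref{s2}), the off-diagonal blocks vanish, i.e. $\hat{M}^{[de]}=0$ for $d\ne e$, so that $\hat{M}$ in (\ref{expand2}) collapses to the block-diagonal form $\hat{M}=\mathrm{diag}(\hat{M}^{[11]},\cdots,\hat{M}^{[nn]})$. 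Since $\Psi=\mathrm{diag}(\Psi^{[1]},\cdots,\Psi^{[n]})$ is likewise block diagonal, a short computation shows that $\widetilde{M}=(\Psi\hat{M}+\hat{M}^T\Psi)/2$ is itself block diagonal, with its $d$-th block equal to exactly the matrix ${}_{\psi^{[d]}}\hat{M}^{[dd]}$ of (\ref{zuhe1}).

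Next I would invoke the pinning result quoted from \cite{CTP1}: under Assumption~\ref{sc} the NLEVec $\psi^{[d]}$ makes each ${}_{\psi^{[d]}}\hat{M}^{[dd]}$ symmetric and negative definite on the whole space. Because a block-diagonal symmetric matrix is negative definite iff each block is, it follows that $\widetilde{M}$ is negative definite, hence $\lambda_{\max}(\widetilde{M})<0$ and the hypothesis of the general theorem is satisfied. From here the argument is identical to (\ref{lyp1})--(\ref{cou2}): with the Lyapunov function $\mathcal{W}(t)=\frac{1}{2}\sum_{d=1}^n \mathcal{Y}^{[d]}(t)^T\Psi^{[d]}\mathcal{Y}^{[d]}(t)$, Assumption~\ref{as1} and Lemma~\ref{l1} yield $\dot{\mathcal{W}}(t)\le 2\mathcal{H}_f\mathcal{W}(t)+2\lambda_{\max}(\widetilde{M})\frac{\eta}{\mathcal{C}(t)}\mathcal{W}(t)$, which is the form of Lemma~\ref{ll5} with $\delta_1=2\mathcal{H}_f$ and $\delta_2=-2\lambda_{\max}(\widetilde{M})\eta>0$; choosing $\eta>(\mathcal{H}_f\mathcal{C}(0)+1)/|\lambda_{\max}(\widetilde{M})|$ enlarges $\delta_2$ enough to trigger PT stability, and therefore PT synchronization of (\ref{model1}).

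The conceptual point, and the only place deserving care, is the distinction between transverse-space and whole-space negative definiteness. In the control-free diagonal case (Theorem~\ref{diag-syn}) the symmetrized blocks are negative definite only on $\mathcal{TS}=\{\mathcal{R}\in R^N\mid\mathcal{R}^T\textbf{1}=0\}$, because the consensus direction $\textbf{1}$ remains a zero mode of each zero-row-sum block. The pinning correction $-\mathrm{diag}(\gamma_{de},0,\cdots,0)$ in (\ref{s2}) destroys this zero mode, which is precisely why $\widetilde{M}$ becomes negative definite on the full space and why the threshold here is stated through $\lambda_{\max}(\widetilde{M})$ rather than through the Fiedler value $\lambda_2(\overline{M})$ of (\ref{expand-syn}). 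Confirming that this zero mode is genuinely removed rather than merely shifted is the substantive step, and it rests on the cited pinning-controllability property of matrices in $A_1$.
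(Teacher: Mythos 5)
Your proposal is correct and follows essentially the same route as the paper: the paper likewise observes that diagonal ICMs force $\hat{M}^{[de]}=0$ for $d\ne e$, so $\hat{M}$ and hence $\widetilde{M}$ are block diagonal with blocks ${}_{\psi^{[d]}}\hat{M}^{[dd]}$, invokes the cited pinning result (\ref{zuhe1}) for whole-space negative definiteness of each block, and then reduces to the general pinning-control theorem. Your explicit remark on why the pinning correction removes the consensus zero mode (whole-space versus transverse-space definiteness) is a useful elaboration of a point the paper leaves implicit, but it does not change the argument.
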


\begin{remark}
The synchronization for complex networks in a finite-time or fixed-time is proved in \cite{LLC2016,lxw2}. However, the settling time depends heavily on the initial system states and/or coupling coefficients and control parameters. It cannot be preset by users according to the needs of tasks. Besides, \cite{lxy1,lxy2} only considered the network with a single weight. Thus, PT synchronization and control for MWDCNs is investigated in our work, which is more realistic and challengeable.
\end{remark}

\begin{remark}
Different from finite-time or fixed-time synchronization, which can be achieved by using nonlinear coupling/control functions, the above designed PT synchronization protocol only uses linear coupling/control functions, and the multiple coupling matrices can be asymmetric, so it is simpler in real applications, whereas the main difference from classical synchronization \cite{WT1} is the design of coupling strength.
\end{remark}

\begin{remark}
Notice that we only discuss the convergence on the time interval $[0, T)$, if one wants to continue to discuss the dynamics after time $T$, for example, keeping the error to be zero, then one can follow and adopt previous studies to maintain synchronization, here we omit these discussions.
\end{remark}

\section{Numerical Simulation}\label{simulation}
Consider a MWDCN (\ref{model0}) with three nodes and four weights, and $f(\cdot)$ is described by \cite{ZF1}:
\begin{align*}
f({x}_i(t))=-x_i(t) +
    \begin{pmatrix}
        -1.25 &-3.2& -3.2\\
        -3.2 &1.1& -4.4\\
        -3.2 &4.4& 1
    \end{pmatrix}
    \begin{pmatrix}
        h(x_i^1(t))\\
        h(x_i^2(t))\\
        h(x_i^3(t))
    \end{pmatrix},
\end{align*}
where $x_i(t)=(x_i^1(t),x_i^2(t),x_i^3(t))^T$
and $h(x_i^d(t))=(|x_i^d(t)+1|-|x_i^d(t)-1|)/2$,
$d=1,2,3$. Hence, $\mathcal{H}_f=5.4704$.

OCMs are chosen as:
\begin{align*}
M^1 =\begin{pmatrix}
        -3 &3& 0\\
        0 &0& 0\\
        3 &0& -3
    \end{pmatrix},
M^2 =\begin{pmatrix}
        0 &0& 0\\
        0 &-6& 6\\
        3 &0& -3
    \end{pmatrix},
M^3 =\begin{pmatrix}
        2 &-2& 0\\
        0 &0& 0\\
        4 &0& -4
    \end{pmatrix},
M^4=\begin{pmatrix}
        -2 &0& 2\\
        0 &-5& 5\\
        4 &0& -4
    \end{pmatrix}.
\end{align*}
Obviously, $M^1$, $M^2$ and $M^3$ represent the network can be not strongly connected, and elements can have competitive relationships. ICMs are chosen as: $\Gamma^1= \mathrm{diag}(7, 5, 6), \Gamma^2=\mathrm{diag}(7, 5, 6), \Gamma^3=\mathrm{diag}(6, -1, 1)$, and $\Gamma^4=\mathrm{diag}(6, 5, 7)$.

Using the formula (\ref{s1}), the sum matrices are
\begin{align*}
M^{[11]} =
    \begin{pmatrix}
        -21 &9& 12\\
        0 &-72& 72\\
        90 &0& -90
    \end{pmatrix},
    M^{[22]} =
    \begin{pmatrix}
        -27 &17& 10\\
        0 &-55& 55\\
        46 &0& -46
    \end{pmatrix},
    M^{[33]} =
    \begin{pmatrix}
        -30 &16& 14\\
        0 &-71& 71\\
        68 &0& -68
    \end{pmatrix}.
\end{align*}
Therefore, Assumption \ref{sc} holds, and the corresponding NLEVec are: $\psi^{[1]}=(0.7362, 0.0920, 0.1718)^T$, $\psi^{[2]}=(0.5274, 0.163, 0.3096)^T$, $\psi^{[3]}=(0.6, 0.1352, 0.2647)^T$. Therefore, $\lambda_2(\overline{M})=-9.9387$. Regulator $\mathcal{C}(t)=(3-t)^2$. According to Theorem \ref{diag-syn}, if $\eta>(\mathcal{H}_f\mathcal{C}(0)+1)/|\lambda_2(\overline{M})|=2.7222$, then PT synchronization can be realized.

The index $E_1(t)=\|x_2(t)-x_1(t)\|_2+\|x_3(t)-x_1(t)\|_2$ is used to denote the synchronization error between nodes. Fig. \ref{simu-pts} shows the dynamics of $E_1(t)$ with $\eta=0.35$, where the initial values are: $x_1(0) = (10, 15, 20)^T$, $x_2(0) = (25, 30, 35)^T$, $x_3(0) = (40, 45, 50)^T$.

\begin{figure}
\centering
\includegraphics[height=0.4\textheight,width=0.7\textwidth] {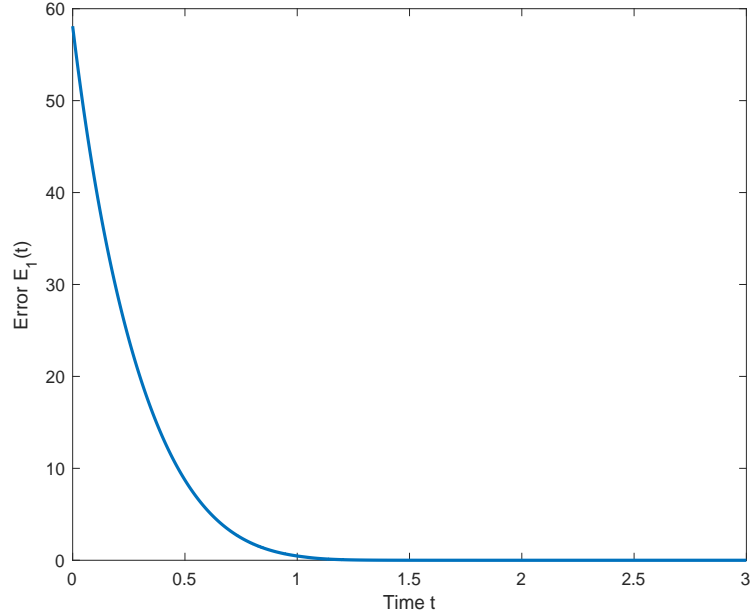}
\caption{Dynamics of $E_1(t)$ under $\eta=0.35$, which shows that PT synchronization can be realized.}\label{simu-pts}
\end{figure}

On the other hand, we consider PT synchronization with pinning control for a MWDCN (\ref{model1}), where the parameters are the same as defined above, the pinning control is added on the first node, and $\Gamma=\mathrm{diag}(11, 13, 15)$, therefore, Using the formula (\ref{s2}), the sum matrices are
\begin{align*}
\hat{M}^{[11]} =
    \begin{pmatrix}
        -32 &9& 12\\
        0 &-72& 72\\
        90 &0& -90
    \end{pmatrix},
\hat{M}^{[22]} =
    \begin{pmatrix}
        -40 &17& 10\\
        0 &-55& 55\\
        46 &0& -46
    \end{pmatrix},
\hat{M}^{[33]} =
    \begin{pmatrix}
        -45 &16& 14\\
        0 &-71& 71\\
        68 &0& -68
    \end{pmatrix}.
\end{align*}
Hence, $\lambda_{\max}(\widetilde{M})=-1.8241$. According to Theorem \ref{diag-ctrl}, if $\eta>(\mathcal{H}_f\mathcal{C}(0)+1)/|\lambda_{\max}(\widetilde{M})|=27.5386$, then PT synchronization can be realized. Denote the synchronization error as $E_2(t)=\sum_{i=1}^3\|x_i(t)-x_0(t)\|_2$, where $x_0(t)$ is the synchronization target satisfying (\ref{st}) with initial value $x_0(0)=(4, 8, 12)^T$. Fig. \ref{simu-ptc} shows the dynamics of $E_2(t)$ with $\eta=0.35$.

\begin{figure}
\centering
\includegraphics[height=0.4\textheight,width=0.7\textwidth] {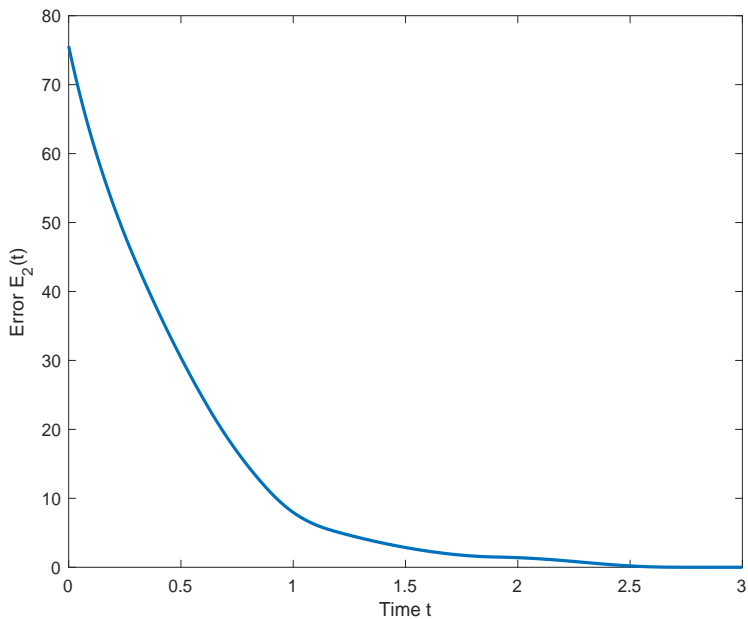}
\caption{Dynamics of $E_2(t)$ under $\eta=0.35$, which shows that PT synchronization can be realized.}\label{simu-ptc}
\end{figure}

\section{Conclusion}\label{conclusion}
In this note, we first study the PT stability, whose settling time is independent of initial values and system parameters. Divergency of improper integral for the time-varying regulator function is used to illustrate the essence of PT stability, and two useful lemmas are proposed. The regulator functions in many previous papers are included in our theory. We also prove rigorously that its derivative is zero, which is important for real applications, because it reflects the magnitude of control. Based on these results, we investigate the PT synchronization for multiweighted and directed complex networks with or without pinning control. Both cooperative and competitive relationships between nodes are allowed. Symmetric ICMs can be diagonal or non-diagonal. We use ROT to consider the PT synchronization for each dimension separately, thus the multiweighted complex networks can be transformed into a single-weighted complex network. At last, numerical simulations are given to verify the theoretical results.

% In the future research, second-order or high-order multiweighted complex networks with multiple time-varying delays will be
% considered to investigate prescribed-time complete synchronization or quasi-synchronization.


\begin{thebibliography}{1}
\bibitem{RR}
R. Olfati-Saber and R. M. Murray, ``Consensus problems in networks of agents with
switching topology and time-delays,'' \emph{IEEE Trans. Autom. Control}, vol.
49, no. 9, pp. 1520-1533, Sep. 2004.

\bibitem{WT1}
W. L. Lu and T. P. Chen, ``New approach to synchronization analysis of
linearly coupled ordinary differential systems,'' \emph{Physica D}, vol. 213,
no. 2, pp. 214-230, Jan. 2006.

\bibitem{CTP1}
T. P. Chen, X. W. Liu, and W. L. Lu, ``Pinning complex networks by a single controller,'' \emph{IEEE Trans. Circuits Syst. I-Regul. Pap.}, vol. 54, no. 6, pp. 1317-1326, Jun. 2007.

%\bibitem{TB}
%T. Botmart, W. Weera, A. Hongsri, N. Yotha and P. Niamsup, ``Dissipative pinning sampled-Data control for function projective synchronization of neural networks with hybrid couplings and time-varying delays,'' \emph{IEEE Access}, vol. 10, pp. 58647-58666, Jun. 2022.

\bibitem{SPB}
S. P. Bhat and D. S. Bernstein, ``Finite-time stability of continuous autonomous systems,'' \emph{SIAM J. Control Optim.}, vol. 38, no. 3, pp. 751-766, Mar. 2000.

%\bibitem{GAM1}
%G. Al-Mahbashi and M. S. M. Noorani, ``Finite-time lag synchronization of
%uncertain complex dynamical networks with disturbances via sliding mode
%control,'' \emph{IEEE Access}, vol. 7, pp. 7082-7092, Jan. 2019.

\bibitem{PY1}
A. Polyakov, ``Nonlinear feedback design for fixed-time stabilization of linear
control systems,'' \emph{IEEE Trans. Autom. Control}, vol. 57, no. 8, pp.
2106-2110, Aug. 2012.

\bibitem{LLC2016}
W. L. Lu, X. W. Liu, and T. P. Chen, ``A note on finite-time and fixed-time stability,'' \emph{Neural Netw.}, vol. 81, pp. 11-15, Sep. 2016.

\bibitem{lxw2}
X. W. Liu and T. P. Chen, ``Finite-time and fixed-time cluster synchronization
with or without pinning control,'' \emph{IEEE Trans. Cybern.}, vol. 48, no. 1, pp.
240-252, Jan. 2018.

\bibitem{ZWL1}
W. L. Zhang, X. S. Yang, and C. D. Li, ``Fixed-time stochastic synchronization of complex networks via continuous control,'' \emph{IEEE Trans. Cybern.}, vol. 49,
no. 8, pp. 3099-3104, Aug. 2019.

\bibitem{BDN}
B. D. Ning and Q. L. Han, ``Prescribed finite-time consensus tracking for multiagent systems with nonholonomic chained-form dynamics,'' \emph{IEEE Trans. Autom. Control}, vol. 64, no. 4, pp. 1686-1693, Apr. 2019.

\bibitem{LF1}
L. Fraguela, M. T. Angulo, J. A. Moreno, and L. Fridman, ``Design of a
prescribed convergence time uniform robust exact observer in the presence of
measurement noise,'' \emph{51st IEEE Conference on Decision and Control (CDC)}, pp. 6615-6620, Dec. 2012.

\bibitem{JD1}
J. D. Sanchez-Tones, E. N. Sanchez, and A. G. Loukianov, ``Predefined-time
stability of dynamical systems with sliding modes,'' \emph{American Control Conference (ACC)}, pp. 5842-5846, Jul. 2015.

\bibitem{WYJ1}
Y. J. Wang, Y. D. Song, D. J. Hill, and M. Krstic, ``Prescribed-time consensus
and containment control of networked multiagent systems,'' \emph{IEEE Trans. Cybern.}, vol. 49, no. 4, pp. 1138-1147, Apr. 2019.

\bibitem{LJL1}
X. Guo, J. L. Liang, ``Prescribed-time bipartite consensus for signed directed networks on time scales,'' \emph{Int. J. Control}, early access, Nov. 2021,
doi: 10.1080/00207179.2021.2004448

%\bibitem{AF}
%A. Ferrara and M. Zambelli, ``Integral second-order sliding modes for robust prescribed-time leader-follower consensus control with partial information,'' \emph{58th IEEE Conference on Decision and Control (CDC)}, pp. 7863-7868, Dec. 2019.

%\bibitem{MB1}
%B. Mao, X. Q. Wu, J. H. Lv, and G. R. Chen, ``Predefined-time bounded consensus of multiagent systems with unknown nonlinearity via distributed adaptive fuzzy control,'' \emph{IEEE Trans. Cybern.}, early access, Apr. 2022, doi: 10.1109/TCYB.2022.3163755

\bibitem{RYH1}
Y. H. Ren, W. N. Zhou, Z. W. Li, L. Liu, and Y. Q. Sun, ``Prescribed-time
consensus tracking of multiagent systems with nonlinear dynamics satisfying
time-varying lipschitz growth rates,'' early access, Sep. 2021, \emph{IEEE Trans. Cybern.}, doi: 10.1109/TCYB.2021.3109294.

%\bibitem{CJA1}
%J. A. Colunga, H. M. Becerra, C. R. Vazquez, and D. Gomez-Gutierrez, ``Robust
%leader-following consensus of high-order multi-agent systems in prescribed
%time,'' \emph{IEEE Access}, vol. 8, pp. 195170-195183, Oct. 2020.

\bibitem{LD1}
D. G. Lyu, M. Sun, and Q. Jia, ``Event-based prescribed-time synchronization of directed dynamical networks with lipschitzian nodal dynamics,'' \emph{IEEE Trans. Circuits Syst. II-Express Briefs}, vol. 69, no. 3, pp. 1847-1851, Mar. 2022.

\bibitem{lxy1}
X. Y. Liu, D. W. C. Ho, and C. L. Xie, ``Prespecified-time cluster synchronization
of complex networks via a smooth control approach,'' \emph{IEEE Trans. Cybern.},
vol. 50, no. 4, pp. 1771-1775, Apr. 2020.

\bibitem{lxy2}
S. Shao, J. D. Cao, Y. F. Hu, and X. Y. Liu, ``Prespecified-time distributed
synchronization of Lur'e networks with smooth controllers,'' \emph{Asian J. Control}, vol. 24, no. 1, pp. 125-136, Jan. 2022.

\bibitem{LWQ2}
W. Q. Li and M. Krstic, ``Stochastic nonlinear prescribed-time stabilization and inverse optimality,'' \emph{IEEE Trans. Autom. Control}, vol. 66, no. 12, pp. 1179-1193, Mar. 2022.

\bibitem{LWQ1}
W. Q. Li and M. Krstic, ``Prescribed-time output-feedback control of stochastic
nonlinear systems,'' \emph{IEEE Trans. Autom. Control}, early access, Feb. 2022, doi: 10.1109/TAC.2022.3151587.

\bibitem{SYD1}
Y. D. Song, Y. J. Wang, J. Holloway, and M. Krstic, ``Time-varying feedback for
finite-time robust regulation of normal-form nonlinear systems,''
\emph{55th IEEE Conference on Decision and Control (CDC)}, pp. 3837-3842, Dec. 2016.

\bibitem{SYD2}
Y. D. Song, Y. J. Wang, J. Holloway, and M. Krstic, ``Time-varying feedback for
regulation of normal-form nonlinear systems in prescribed finite time,''
\emph{Automatica}, vol. 83, pp. 243-251, Sep. 2017.

\bibitem{SYD3}
Y. D. Song, Y. J. Wang, and M. Krstic, ``Time-varying feedback for stabilization in prescribed finite time,'' \emph{Int. J. Robust Nonlinear Control}, vol. 29, no. 3, pp. 618-633, Feb. 2019.

\bibitem{SYD4}
H. F. Ye and Y. D. Song, ``Prescribed-time control of uncertain strict-feedback-like systems,'' \emph{Int. J. Robust Nonlinear Control}, vol. 31, no. 11, pp. 5281-5297, Jul. 2021.

\bibitem{SYD5}
L. B. Sun, H. W. Cao, and Y. D. Song, ``Prescribed performance control of constrained euler-language systems chasing unknown targets,'' \emph{IEEE Trans. Cybern.}, early access, Jan. 2022, dol: 10.1109/TCYB.2021.3134819

\bibitem{WZW1}
Z. W. Wang, B. Liang, Y. C. Sun, and T. Zhang, ``Adaptive fault-tolerant
prescribed-time control for teleoperation systems with position error
constraints,'' \emph{IEEE Trans. Ind. Inform.}, vol. 16, no. 7,
pp. 4889-4899, Jul. 2020.

\bibitem{SA1}
A. Shakouri and N. Assadian, ``Prescribed-time control with linear decay
for nonlinear systems,'' \emph{IEEE Control Syst. Lett.}, vol. 6, pp. 313-318,
2022.

\bibitem{KP1}
P. Krishnamurthy, F. Khorrami, and M. Krstic, ``Robust adaptive prescribed-time
stabilization via output feedback for uncertain nonlinear strict-feedback-like
systems,'' \emph{Eur. J. Control}, vol. 55, pp. 14-23, Sep. 2020.

\bibitem{DT1}
D. Tran and T. Yucelen, ``Finite-time control of perturbed dynamical systems based on a generalized time transformation approach,'' \emph{Syst. Control Lett.}, vol. 136, Feb. 2020, art. no. 104605.

\bibitem{DT2}
D. Tran, T. Yucelen, and S. B. Sarsilmaz, ``Finite-time control of multiagent networks as systems with time transformation and separation principle,''
\emph{Control Eng. Practice}, vol. 108, Mar. 2021, art. no. 104717.

\bibitem{ZB1}
B. Zhou and Y. Shi, ``Prescribed-time stabilization of a class of nonlinear
systems by linear time-varying feedback,'' \emph{IEEE Trans. Autom. Control},
vol. 66, no. 12, pp. 6123-6130, Dec. 2021.

\bibitem{YXQ1}
X. Q. Yao, Y. Liu, Z. J. Zhang, and W. W. Wan, ``Synchronization rather than finite-time synchronization results of fractional-order multi-weighted complex networks,'' \emph{IEEE Trans. Neural Netw. Learn. Syst.}, early access, Jun. 2021, doi: 10.1109/TNNLS.2021.3083886


\bibitem{WJL2}
J. L. Wang, Z. Qin, H. N. Wu, and T. W. Huang, ``Finite-time synchronization
and $H_{\infty}$ synchronization of multiweighted complex networks with adaptive state couplings,'' \emph{IEEE Trans. Cybern.}, vol. 50, no. 2, pp. 600-612, Feb. 2020.

%\bibitem{BK}
%B. Kaviarasan, O. M. Kwon, M. J. Park, and R. Sakthivel, ``Integrated synchronization and anti-disturbance control design for fuzzy model-based multiweighted complex network,'' \emph{IEEE Trans. Syst. Man Cybern. -Syst.}, vol. 51, no. 10, pp. 6330-6341, Oct. 2021.

\bibitem{lxw1}
X. W. Liu, ``Synchronization and control for multiweighted and directed complex networks,'' \emph{IEEE Trans. Neural Netw. Learn. Syst.}, early access, Sep. 2021,
doi: 10.1109/TNNLS.2021.3110681

\bibitem{wjq}
J. Q. Wang and X. W. Liu, ``Cluster synchronization for multi-weighted and
directed complex networks via pinning control,'' \emph{IEEE Trans. Circuits
Syst. II-Express Briefs}, vol. 69, no. 3, pp. 1347-1351, Mar. 2022.

\bibitem{LSR2}
S. R. Lin and X. W. Liu, ``Synchronization and control for directly coupled reaction-diffusion neural networks with multiple weights and hybrid coupling,'' \emph{Neurocomputing}, vol. 487, pp. 144-156, May 2022.

\bibitem{RAH}
R. A. Horn and C. R. Johnson, \emph{Matrix Analysis}. Cambridge, U.K.: Cambridge Univ. Press, 1990.

%\bibitem{AB}
%A. Banner, \emph{The Calculus Lifesaver}, New Jersey, U.S.A.: Princeton Univ. Press, 2007.


%
%\bibitem{TMS}
%G. B. Thomas, R. L. Finney, M. D. Weir, and F. R. Giordano, \emph{Thomas' Calculus}, New Jersey: Addison-Wesley, 2005.

%\bibitem{VS1}
%V. Lakshmikantham, and S. Leela, \emph{Differential and Integral Inequalities}, New York: Academic Press, 1969.



\bibitem{ZF1}
F. Zou and J. A. Nossek, ``Bifurcation and chaos in cellular neural networks,''
\emph{IEEE Trans. Circuits Syst. I-Fundam. Theor. Appl.}, vol. 40, no. 3, pp. 166-173, Mar. 1993.

\end{thebibliography}
\end{document}